\newcommand{\UDoc}[2]{{U(#2|#1)}}       
\newcommand{\doc}{{d}}                    
\begin{document}

\title{Structured Learning of Two-Level Dynamic Rankings}
\author{}

\newtheorem{mydef}{Definition}
\newtheorem{thrm}{Theorem}
\newtheorem{observ}{Observation}

\maketitle

\begin{abstract}
For ambiguous queries, conventional retrieval systems are bound by two conflicting goals. On the one hand, they should diversify and strive to present results for as many query intents as possible. On the other hand, they should provide depth for each intent by displaying more than a single result. Since both diversity and depth cannot be achieved simultaneously in the conventional {\em static} retrieval model, we propose a new {\em dynamic} ranking approach. Dynamic ranking models allow users to adapt the ranking through interaction, thus overcoming the constraints of presenting a one-size-fits-all static ranking.
In particular, we propose a new \textit{two-level} dynamic ranking model for presenting search results to the user. In this model, a user's interactions with the first-level ranking are used to infer this user's intent, so that second-level rankings can be inserted to provide more results relevant for this intent.
Unlike for previous dynamic ranking models, we provide an algorithm to efficiently compute dynamic rankings with provable approximation guarantees for a large family of performance measures. We also propose the first principled algorithm for learning dynamic ranking functions from training data.
In addition to the theoretical results, we provide empirical evidence demonstrating the gains in retrieval quality that our method achieves over conventional approaches.

\end{abstract}


%
%

\keywords{Diversified Retrieval, Structured Learning, Submodular Optimization,  Web Search \& Information Retrieval}

%
%

\section{Introduction}
\label{sec:intro}

 
Search engine users often express different information needs using the same query. This leads to the well-known problem of \emph{ambiguous queries}, where a single query can represent multiple intents. In some cases, the ambiguity in intent can be coarse (\emph{e.g.}, queries such as \emph{apple}, \emph{jaguar} and \emph{SVM}). In other cases, the distinctions can be more fine-grained (\emph{e.g.,} the query \emph{apple ipod} with the intent of either buying the device or reading reviews). 



Conventional retrieval methods do not explicitly model query ambiguity, but simply select a ranking of results by maximizing the probability of relevance independently for each document \cite{Robertson/77}. A major limitation of this approach is that it favors results for the most prevalent intent. In the extreme, the retrieval system focuses entirely on the prevalent intent, but produces no relevant results for the less popular intents. Diversification-based methods (e.g. \cite{carb:98, zhai:03, chen:06, yue:08}) try to alleviate this problem by including at least one relevant result for as many intents as possible. However, this necessarily leads to fewer relevant results for each intent. Clearly, there is an inherent trade-off between \emph{depth} (number of results provided for an intent) and \emph{diversity} (number of intents served). In the conventional ranked-retrieval setting, choosing one invariably leads to the lack of the other. A natural question that arises in this context is: how can we obtain diversity while not compromising on depth?



We argue that a key to solving the conflict between depth and diversity lies in the move from conventional {\em static} retrieval models to {\em dynamic} retrieval models \cite{brandt:11} that can take advantage of user interactions. Instead of presenting a single one-size-fits-all ranking, dynamic retrieval models allow users to adapt the ranking dynamically through interaction. Brandt et al. \cite{brandt:11} have already given theoretical and empirical evidence that even limited interactivity can greatly improve retrieval effectiveness. However, Brandt et al. \cite{brandt:11} did not provide an efficient algorithm for computing dynamic rankings with provable approximation guarantees, nor did they provide a principled algorithm for learning dynamic ranking functions from training data. In this paper, we resolve these two open questions.



In particular, we propose a new \emph{two-level dynamic ranking model}. The intuition behind the models is that the first level provides a diversified ranking of results. The system then senses the user's interactions with the first-level and interactively provides a second-level rankings conditioned on this feedback. A possible layout for such two-level rankings is given in Figure~\ref{fig:ex}. The left-hand panel shows the first-level ranking initially presented to the user. The user then chooses to expand the second document (\emph{e.g.} by mousing over or clicking) and a second-level ranking is inserted as shown in the center panel. Conceptually, the retrieval system maintains two levels of rankings as illustrated in the right-hand panel, where each second-level ranking is conditioned on the head document in the first-level ranking.


\begin{figure*}[ht]
\begin{center}
\includegraphics[height=2.6in]{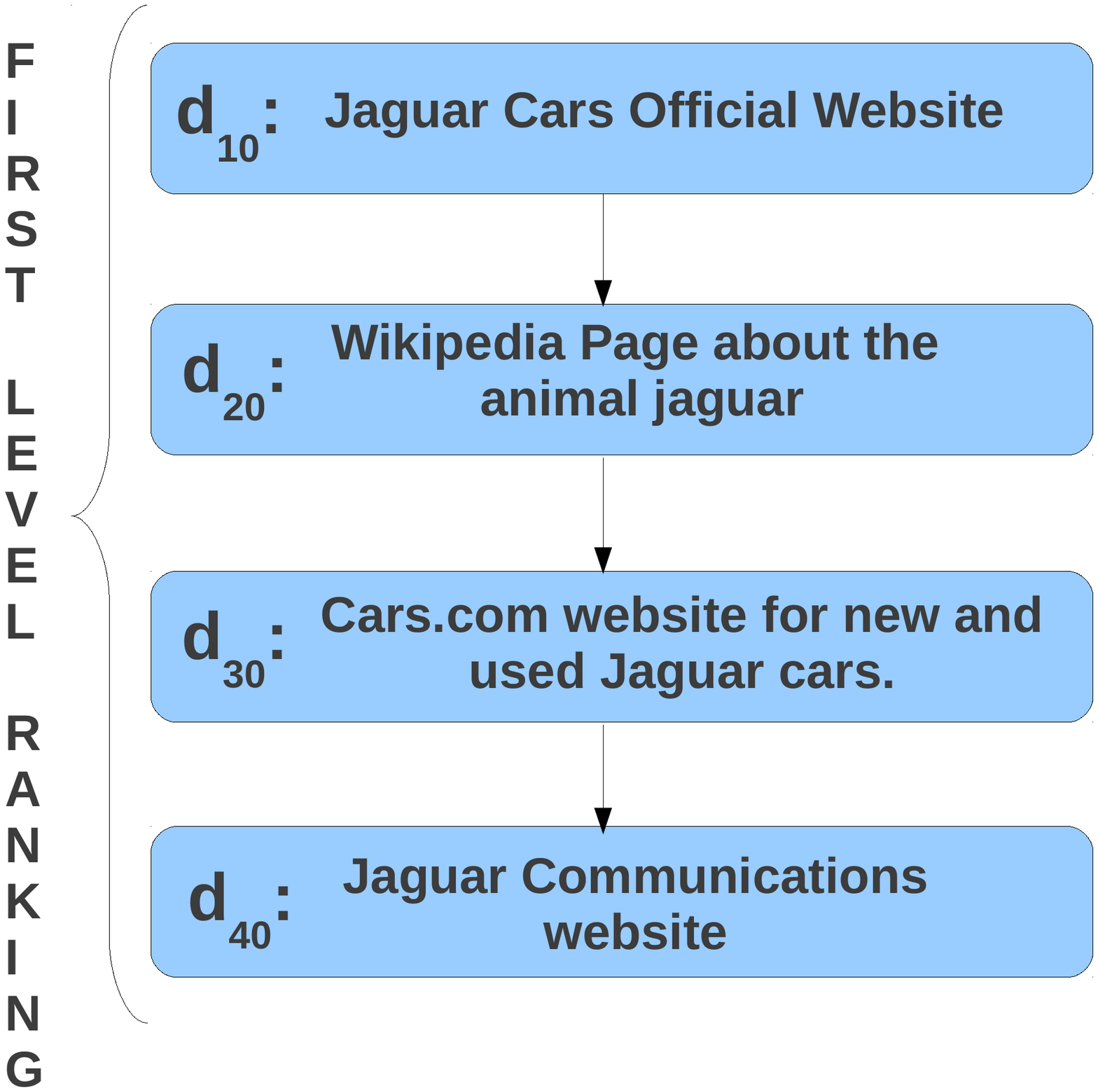}
\includegraphics[height=2.6in]{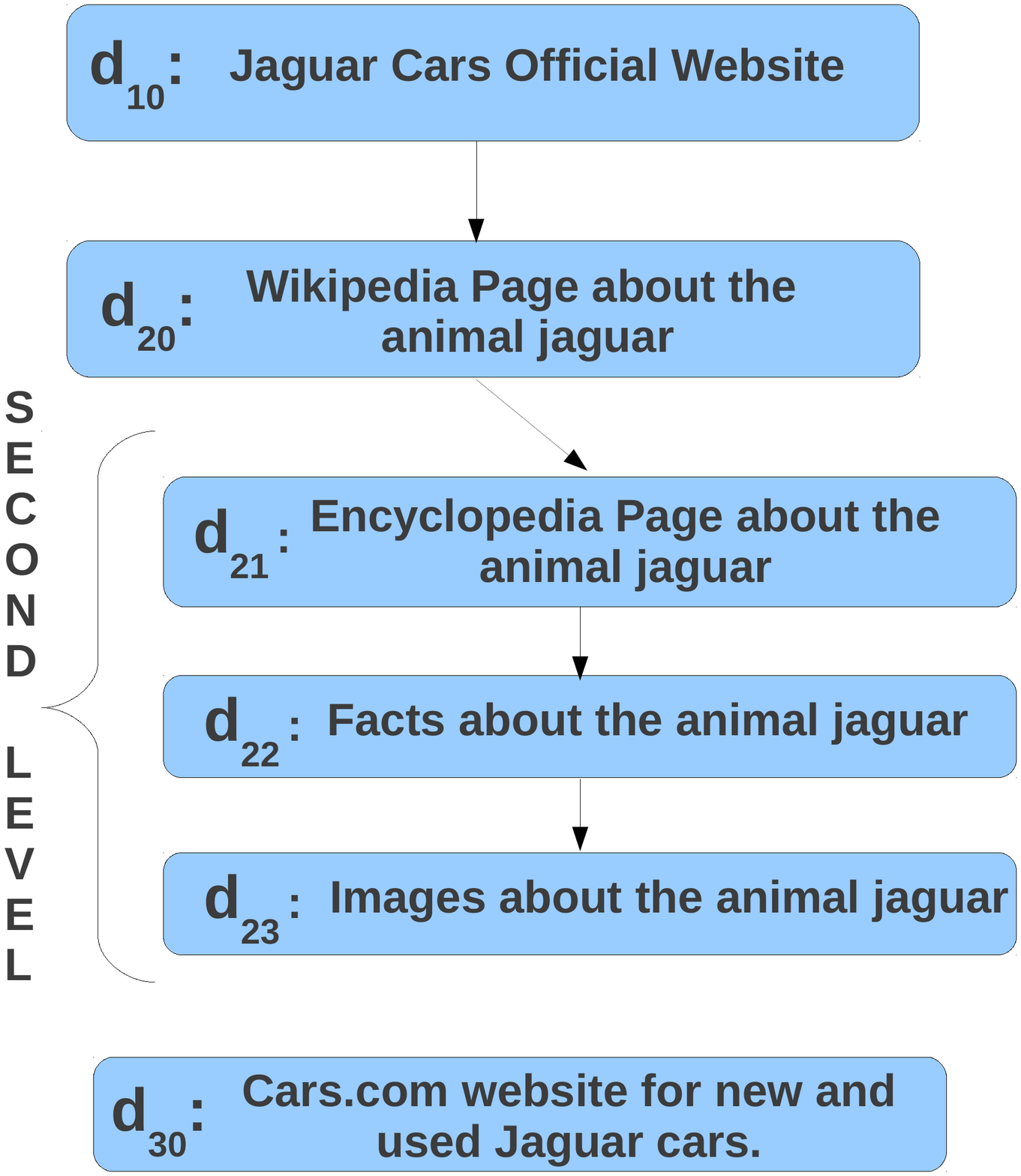}
\includegraphics[height=2.6in]{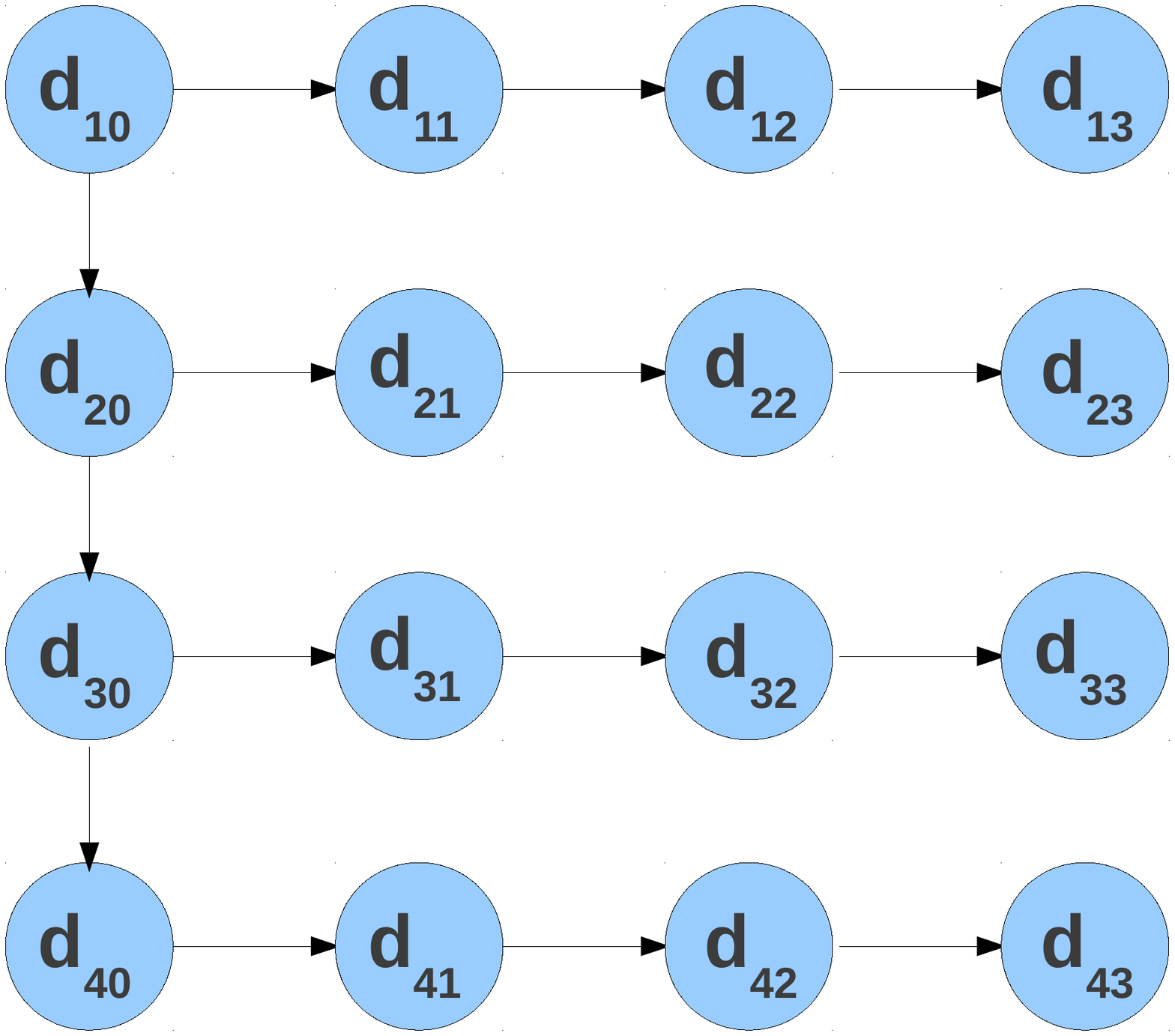}
\end{center}
\vskip -0.4in
\caption{A typical two-level ranking for the query ``jaguar''. A user interested in the animal ``jaguar'' interacts with the first-level ranking (left panel) and can expand results of interest to see additional results (middle panel). A  two-level rankings can be thought of as a two-dimensional matrix (right panel).}
\label{fig:ex}
\end{figure*}


To operationalize the construction and learning of two-level rankings in a rigorous way, we define a new family of performance measure for diversified retrieval. Many existing retrieval measures (e.g., Precision@k, DCG, Intent Coverage) are special cases of this family. We then operationalize the problem of computing an optimal two-level ranking as maximizing the given performance measure. While this optimization problem is NP-hard, we provide an algorithm that we show has a $(1 - \frac{1}{e^{1 - \frac{1}{e}}})$ approximation guarantee.


Finally, we also propose a new method for learning the (mutually dependent) relevance scores needed for two-level rankings. Following a structural SVM approach, we learn a discriminant model that resembles the desired performance measure in structure, but learns to approximate unknown intents based on query and document features. This method generalizes the learning method from \cite{yue:08} to two-level rankings and a large class of loss functions.
In addition to theoretical results, we evaluate the properties of our model, the algorithm for computing two-level rankings, and the learning methods through a detailed empirical analysis.




%
%

\section{Related Work}
\label{sec:related}


Traditional non-diversified methods for retrieval focus on ranking documents based on their probability of relevance to the query \cite{Robertson/77,lav:01,Liu:2009}. However, these models are problematic in the case of \emph{ambiguous queries}, as they tend to favor the most common user intent while ignoring the others.

Diversified retrieval aims to overcome the challenge of query ambiguity by providing diversity in search results \cite{carb:98, zhai:03, chen:06, yue:08, clarke:08}. In the extreme case diversified retrieval methods maximize {\it intent coverage}, meaning that they aim at covering as many intents in the ranking as possible by providing just a single document per intent. The methods in \cite{swam:08,rad:08,yue:08} formulate this problem as a set coverage problem. Most concretely, \cite{rad:08} proposed a multi-armed bandit algorithm, showing that it maximizes the number of users presented with at least one relevant document with provable guarantees. While diversification methods alleviated the problem of ignoring less frequent intents, they all explicitly or implicitly improve diversity at the expense of depth (i.e., they present only a few relevant documents for any given intent).

Recent work by Brandt et al. \cite{brandt:11} has focused on addressing the above issue.  They propose a \emph{dynamic} ranked-retrieval model that allows user interaction. User can interactively expand results so that a dynamic ranking gets created on the fly. Instead of following a static list of results, users construct their individual ranking as a path through a ranking tree. Since users with different intents take different paths, it is possible to tailor both the distribution and content of each path. Brandt et al. \cite{brandt:11} have shown that this small amount of interactivity overcomes the inherent trade-off between diversity and depth of conventional static rankings. We also follow this idea of dynamic ranking, but with the following differences. First, we focus on a simpler and more plausible model of user behavior. Unlike in \cite{brandt:11}, we do not assume that users are willing to provide feedback that is more than one level deep (see Section~\ref{sec:tldr}), and we allow users to backtrack to a higher level. Unlike the algorithms for constructing dynamic rankings presented in \cite{brandt:11}, we present an algorithm that has provable approximation guarantees. Furthermore, our algorithm and model apply to a large class of submodular performance measures, which include those of \cite{brandt:11} and \cite{rad:08,yue:08} as special cases. And finally, we propose a principled method for learning dynamic ranking functions. 

Our learning method follows a long line of research on training retrieval functions \cite{Liu:2009}. However, with the exception of \cite{yue:08}, virtually no other work directly addresses the problem of learning diversified retrieval functions. Following an idea first presented in \cite{swam:08}, Yue and Joachims \cite{yue:08} relate diversity in word occurences to diversity in intents. They propose to learn this relationship using a structural SVM method, where they formulate the discriminant function as a coverage problem and optimize intent coverage as the loss function.  We also employ a similar learning technique and also draw a correlation between intents and words. However, our learning method goes beyond single-level static ranking to predict two-level dynamic ranking, and it can optimize a large family of loss functions as defined in Section~\ref{sec:submod}.

The dynamic retrieval model we present in this paper is a special case of \emph{interactive retrieval}. An interactive retrieval setting involves multiple interactions between users and a system. Our model is most closely related to relevance feedback (e.g. \cite{aalb:92,Shen/Zhai/05a,Xu/Akella/08a,Zhang/Zhang/10a}), where the system presents a set of results and the user provides either implicit or explicit feedback. The feedback can then be used by the system to update the ranking. Note that the interface sketched in Figure~\ref{fig:ex} is inspired by SurfCanyon.com \cite{cramer:09}.
  
  


%
%

\section{Two-Level Dynamic Rankings}
\label{sec:tldr}



 Current methods for diversified retrieval are \emph{static} in nature.
 Such \emph{static} rankings stay unchanged through a user session.
 On the other hand, a \emph{dynamic} model can adapt the ranking based on interactions with the user.
 The primary motivation for using a dynamic model is the inherent trade-off between depth and diversity in static models. Figure \ref{fig:ex} illustrates the two-level dynamic rankings considered in this paper. We now provide a simple quantitative example to show how such two-level dynamic rankings can achieve both diversity and depth.
 
 \begin{table}
 \begin{center}
 \begin{tabular}{|c || c c c c c c c c c|}
 \hline 
 Intent &  $d_1$ & $d_2$ & $d_3$ & $d_4$ & $d_5$ & $d_6$ & $d_7$ & $d_8$ & $d_9$\\
 \hline 
 $t_1$ & 1 & 1 & 1 & 0 & 0 & 0 & 0 & 0 & 0\\
 $t_2$ & 0 & 0 & 0 & 1 & 1 & 1 & 0 & 0 & 0\\
 $t_3$ & 0 & 0 & 0 & 0 & 0 & 0 & 1 & 1 & 0\\
 $t_4$ & 0 & 0 & 0 & 0 & 0 & 0 & 1 & 0 & 1\\
 \hline 
 \end{tabular}
 \caption{ \label{tab:toyexample} Utility $U(d_j|t_i)$ of a document $d_j$ given an intent $t_i$.} 
 \end{center}
 \end{table}


 Consider four user intents $\{t_1,...,t_4\}$ and nine documents $\{d_1,...,d_9\}$ with relevance judgments $U(d_j|t_i)$ as given in Table \ref{tab:toyexample}. In this example, we assume that the user intents are equally likely. 
 On the one hand, a non-diversified static ranking method could present $d_7 \rightarrow d_8 \rightarrow d_9$ as its top three documents. This means that users with intents $t_3$ and $t_4$ get two relevant documents, but it fails to provide any relevant documents to users with intents $t_1$ and $t_2$. On the other hand, a diversified static ranking with $d_7 \rightarrow d_1 \rightarrow d_4$ as the top three documents covers all intents, but no user gets more than one relevant document. Therefore, this ranking lacks depth.


As an alternative, now consider a two-level dynamic ranking as follows. The user is presented with $d_7 \rightarrow d_1 \rightarrow d_4$ as the first-level ranking. Users can now expand any of the first-level results to receive a second-level ranking. Assume that users interested in $d_7$ (and thus having intent $t_3$ or $t_4$) will \emph{expand} that result and receive a second-level ranking consisting of $d_8$ and $d_9$. Similarly, users interested in $d_1$ will get $d_2$ and $d_3$. And users interested in $d_4$ will get $d_5$ and $d_6$. Note that every intent is covered in the top three results of the first-level ranking. At the same time, users with intents $t_3$ and $t_4$ receive two relevant results in the top three positions of their dynamically constructed ranking $d_7 \rightarrow d_8 \rightarrow d_9 \rightarrow d_1 \rightarrow d_4$; users with intent $t_1$ also receive two relevant results in the top three positions of $d_7 \rightarrow d_1 \rightarrow d_2 \rightarrow d_3 \rightarrow d_4$; and users with intent $t_2$ still receive one relevant result in the top three of $d_7 \rightarrow d_1 \rightarrow d_4 \rightarrow d_5 \rightarrow d_6$. This example illustrates how a dynamic two-level ranking can provide diversity and increased depth simultaneously. 


In the above example, interactive feedback from the user was the key to achieving both depth and diversity. More generally, we assume that users interact with the dynamic ranking according to the following {\bf User Model}, which we denote as policy $\pi_d$. While other policies of user behavior are possible, we focus on $\pi_d$ for the sake of simplicity. Users expand a first-level document if and only if that document is relevant to their intent. When users skip a document, they continue with the next first-level result. When users expand a first-level result, they go through the second-level rankings before continuing from where they left off in the first-level ranking.  It is thus possible for a user to see multiple second-level rankings. Hence we do not allow documents to appear more than once across all two-level rankings.

 Note that this user model differs from the one proposed in \cite{brandt:11} in several way. First, it assumes only one level of feedback, while the model in \cite{brandt:11} assumes that users are willing and able to provide feedback many levels of rankings deep. Second, we model that users return to the top-level ranking, which is not allowed in the model of \cite{brandt:11}. We believe that these differences make the two-level model easier to understand for the user and therefore more plausible for practical use.


We now define some notation used in the rest of this paper. The set of documents shown on the first level are called the {\bf head} documents. The number of documents shown in this level is the {\bf length} of the first-level, and it is denoted by $L$. The set of documents shown on the second level are called the {\bf tail} documents. A {\bf row} denotes a particular {\bf head} document and all the {\bf tail} documents that follow it in the second level.  The length of a row, excluding the head document, is denoted by $W$ and is referred to as the {\bf width}. 
 Static rankings are denoted as $\theta$ while two-level rankings are denoted as $\Theta = (\Theta_1, \Theta_2, ... \Theta_i, ..)$.
 Here $\Theta_i = (d_{i0}, d_{i1}, ...., d_{ij}, ...)$ refers to the $i^{th}$ row of a two-level ranking, with $d_{i0}$ representing the head document of the row and $d_{ij}$ denoting the $j^{th}$ tail document of the second-level ranking. We denote the candidate set of documents to rank for a query $q$ by ${\cal D}(q)$, the set of possible intents by ${\cal T}(q)$ and $\mathbf{P}[t|q]~\forall t \in {\cal T}(q)$ denotes a distribution over the intents given a query $q$.  Unless otherwise mentioned, dynamic ranking refers to a two-level dynamic rankings in the rest of this paper.

%
%

\section{Performance Measures for Diversified Retrieval}



To define what constitutes a good two-level dynamic ranking, we start by defining the measure of retrieval performance we would like to optimize. We then design our retrieval algorithms and learning methods to maximize this measure. In the following, we first consider evaluation measures for one-level rankings, and then generalize them to the two-level case. 

\subsection{Measures for Static Rankings}

Existing performance measures range from those that do not explicitly consider multiple intents (e.g. NDCG, Avg Prec), to measures that reward diversity. Measures that reward diversity give lower marginal utility to a document, if the intents the document is relevant to are already well represented in the ranking. We call this the {\em diminishing returns} property. The extreme case is the ``intent coverage'' measure (e.g. \cite{swam:08,rad:08,yue:08}). It attributes utility only to the first document relevant for an intent and no additional utility to any additional documents.


We now define a family of measures that includes a whole range of diminishing returns models, and that includes most existing retrieval measures. Let $g :\mathbb{R} \rightarrow \mathbb{R}$ with $g(0)=0$ be a concave, non-negative, and non-decreasing function that models the diminishing returns, then we define utility of the ranking $\theta=(d_1,d_2,...,d_k)$ conditioned on a given intent $t$ as
 \begin{equation}\label{eq:metric1}  U_{g}(\theta|t)=g\Big(\sum_{i=1}^{|\theta|}{\gamma_i \UDoc{t}{d_i}\Big)}. 
 \end{equation}
The $\gamma_1 \ge \gamma_2 \ge ... \ge \gamma_k \ge 0$ are discount factors that decrease with position in the ranking, and $\UDoc{t}{d}$ is the relevance rating of document $d$ for intent $t$.
The above definition of utility is with respect to a single user intent $t$. For a distribution of user intents $\mathbf{P}[t|q]$ for query $q$, we take the expectation
\begin{equation} \label{eq:metric}
  U_{g}(\theta|q)=\sum_{t\in {\cal T}(q)}{\mathbf{P}[t|q] \times U_{g}(\theta|t)}.
\end{equation}



Note that many existing retrieval measures are special cases of the definition above. For example, if one chose $g$ to be the identity function, one recovers the intent-aware measures proposed in \cite{agra:09} and the modular measures defined in \cite{brandt:11}. Further restricting $\mathbf{P}[t|q]$ to put all probability mass on a single intent leads to conventional measures like DCG \cite{jar:02} for appropriately chosen $\gamma_i$. At the other extreme, chosing $g(x)=\min(x,1)$ leads to the intent coverage measure \cite{rad:08,yue:08,swam:08} that assigns utility only to the first relevant document for an intent. Beyond these special cases, $g$ can be chosen from a large class of functions, implying that this family of performance measures covers a wide range of diminishing returns models.

\subsection{Measures for Dynamic Rankings}
\label{sec:submod}

 We are now ready to extend our family of performance measures to dynamic rankings. The key change for dynamic rankings is that users interactively adapt which results they view. 

How users expand first-level results was defined in Section~\ref{sec:tldr} as $\pi_d$. Under $\pi_d$, it is natural to define the utility of a dynamic ranking $\Theta$ as follows. For a user intent $t$ and a concave, non-negative, and non-decreasing function $g$, 
\begin{equation}\label{eq:util_t}
   U_{g}(\Theta|t) = g\bigg(\!\sum_{i=1}^{|\Theta|}\!{\Big(\!\gamma_i \UDoc{t}{\doc_{i0}}} + \!\sum_{j=1}^{|\Theta_i|}\gamma_{ij}{\UDoc{t}{\doc_{i0}} \UDoc{t}{\doc_{ij}}} \!\Big)\!\!\bigg).
 \end{equation}
Like for static rankings, $\gamma_1 \ge \gamma_2 \ge ...$ and $\gamma_{i1} \ge \gamma_{i2} \ge ...$ are position-dependent discount factors. Furthermore, we again take the expectation as in Equation \ref{eq:metric} to average over multiple user intents to obtain $U_g(\Theta|q)$.

Note that the utility of a second-level ranking for a given intent is zero unless the head document in the first-level ranking has non-zero relevance for that intent. This encourages second-level rankings to only contain documents relevant to the same intents as the head document, thus providing depths. The first-level ranking, on the other hand, provides diversity as controlled through the choice of function $g$. The ``steeper'' $g$ diminishes returns of additional relevant documents, the more diverse the first-level ranking gets. We will explore this in more detail in Section~\ref{sec:exconcave}.


A key advantage of modeling utility in the above form is that it allows for an efficient algorithm for finding a dynamic ranking which maximizes the utility. We present this in the next section.

\newcommand{\argmax}{\operatornamewithlimits{argmax}}

\section{Computing dynamic rankings}
\label{sec:algo}

In this section, we provide a greedy algorithm to compute dynamic rankings. These rankings are computed by maximizing the performance measures defined in the previous section. Computing rankings to exactly maximize our performance measure is an NP-hard problem. However, we show that our two-level greedy algorithm admits a $(1- e^{-1+1/e})$-approximation guarantee. 

Our proposed greedy algorithm is given in  Algorithm \ref{alg1}. We are given a query $q$, a candidate set of documents ${\cal D}(q)$, intents ${\cal T}(q)$ with their distribution $\mathbf{P}[t|q]$, and a concave, non-negative and non-decreasing function $g$ that defines the utility in (\ref{eq:metric}). Our goal is to construct a two dimensional ranking of length $L$ and width $W$ to maximize the performance measure (\ref{eq:metric}). In the algorithm, the ``overloaded operator'' $\oplus$ denotes either adding a document to a row, or adding a row to an existing ranking.

The proposed algorithm works as follows. Every document in the remaining collection is considered as the head document of a candidate row. For each candidate row, $W$ documents are greedily added to maximize the utility of the resulting partial dynamic ranking. Once rows of length $W$ are constructed, the row which maximizes the utility is added to the ranking. The above steps are repeated until the ranking has $L$ rows.  

\begin{algorithm}
\caption{for computing a two-level dynamic ranking.}
\label{alg1}
\begin{algorithmic}
\STATE \textbf{Input:} $(q, {\cal D}(q), {\cal T}(q), \mathbf{P}[t|q]: t \in {\cal T}(q))$,  $g(\cdot)$, length $L$ and width $W$.
\STATE \textbf{Output:}  A dynamic ranking $\Theta$.
\STATE $\Theta \leftarrow \text{new\_two\_level}()$ 
\WHILE {$|\Theta| \le L$}
\STATE $bestU \leftarrow - \infty$
\FORALL {$ d \in {\cal D}(q)$ s.t. $d \notin \Theta$}
\STATE $row \leftarrow \text{new\_row}();~~~row.head \leftarrow d$
\FOR {$j=1$ to $W$} 
\STATE $bestDoc \leftarrow \argmax_{d' \notin \Theta \cup row} U_g(\Theta \oplus (row \oplus d')|q)$
\STATE $row \leftarrow row \oplus bestDoc$
\ENDFOR
\IF{$U_g(\Theta \oplus row|q) > bestU$}
 \STATE $bestU \leftarrow U_g(\Theta \oplus row|q);~~~bestRow \leftarrow row $
\ENDIF
\ENDFOR
\STATE $\Theta \leftarrow \Theta \oplus bestRow$
\ENDWHILE

\end{algorithmic}
\end{algorithm}






The proposed algorithm is extremely simple and efficient. The algorithm requires  $O(|{\cal T}|)$ space and $O(|{\cal T}||{\cal D}|^2)$ time, where $|{\cal T}|$ is the total number of intents and $|{\cal D}|$ is the number of candidate documents. The run time of the algorithm can be further improved using techniques such as lazy evaluation \cite{lazy}.

We now derive an approximation bound for the greedy algorithm by relating it to the well-known 
problem of optimizing submodular set functions. First, recall the following definition of a 
submodular function.

 \begin{mydef}
 Given a set $U$, a function $f:2^{U}\rightarrow \mathbb{R}$ is said to be \textbf{submodular} {iff} for all elements $u \in U$ and all sets $X$ and $Y$, s.t. $X \subseteq Y \subseteq U$, we have 
\begin{align}
\label{eq:submodular}
f(X \cup \{u\}) - f(X) \geq f(Y \cup \{u\}) - f(Y).
\end{align}
 \end{mydef}

When a submodular function is monotonic (i.e., $f(Y) \le f(X)$, whenever $Y \subseteq X$) and normalized (i.e., $f(\phi) = 0$), greedily constructing a set of size $k$ gives an $(1-1/e)$ approximation \cite{submod} to the optimal.

Since the definition of our utility in (\ref{eq:metric}) involves a concave function, it is not hard to show that finding the next best row to add ({\bf outer step}) is a submodular maximization problem. Moreover, given the head document, finding the best row ({\bf inner step}) is also a submodular maximization problem. Thus, finding a dynamic ranking to maximize our utility is a \emph{nested} submodular maximization problem. Since submodular function maximization is a hard problem, we can only find an \emph{approximately} good row (rather than the best greedy row) to add in each step. In spite of this, we can show an approximation guarantee for the greedy two-level ranking algorithm. Our result generalizes submdoular function maximization from one level to two levels in the same spirit as Hochbaum and Pathria \cite{hoc:98} generalize the coverage problem from one level to two levels. 



\newtheorem{lemma}{Lemma}
\begin{lemma}
 The nested greedy algorithm for the nested submodular optimization problem has a   $(1- \frac{1}{e^{(1- \frac{1}{e})} })$ approximation bound.
\end{lemma}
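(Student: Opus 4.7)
The plan is to reduce the bound to a two-level invocation of the classical Nemhauser--Wolsey--Fisher greedy analysis for monotone submodular maximization, together with a standard refinement that handles an $\alpha$-approximate greedy oracle. First, I would record the structural facts about the objective. For each intent $t$, the expression inside $g(\cdot)$ in Equation~\ref{eq:util_t} is a non-negative modular function of the documents chosen (head $d_{i0}$ contributes $\gamma_i U(t|d_{i0})$, and each tail $d_{ij}$ contributes $\gamma_{ij} U(t|d_{i0})\, U(t|d_{ij})$, independently of the other choices). Composing this modular function with $g$, which is concave, non-negative, non-decreasing, and satisfies $g(0)=0$, yields a normalized monotone submodular function, and taking the convex combination over intents via $\mathbf{P}[t|q]$ preserves these three properties. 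Submodularity therefore holds both at the \emph{inner} level (tail documents, with head fixed) and at the \emph{outer} level (whole rows).

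Second, I would analyze the inner loop. With a partial ranking $\Theta$ and a candidate head $d_{i0}$ fixed, the function $T \mapsto U_g(\Theta \oplus (d_{i0},T)\,|\,q)$ is monotone submodular in $T$ (after subtracting the constant $U_g(\Theta|q)$ it is also normalized). The inner loop is exactly vanilla greedy cardinality-$W$ maximization, so by Nemhauser--Wolsey--Fisher the constructed row attains utility within a factor $(1-1/e)$ of the optimal row with that head. Since the outer step maximizes over all candidate heads, it effectively returns a row whose marginal gain is at least $\alpha := 1 - 1/e$ times the marginal gain of the best next row.

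Third, I would apply the $\alpha$-approximate-oracle version of the greedy analysis at the outer level. Let $F(\cdot) = U_g(\cdot|q)$ viewed as a function of the set of rows, let $S_k$ be the first $k$ rows produced by the algorithm, and let $S^*$ be an optimal ranking of $L$ rows. Monotonicity and submodularity give
\[
F(S^*) - F(S_k) \;\le\; L\,\max_{r}\bigl[F(S_k\cup\{r\}) - F(S_k)\bigr] \;\le\; (L/\alpha)\bigl[F(S_{k+1}) - F(S_k)\bigr],
\]
where the second inequality uses that the outer greedy only $\alpha$-approximates the best marginal. Rearranging yields $F(S^*)-F(S_{k+1}) \le (1-\alpha/L)\bigl[F(S^*)-F(S_k)\bigr]$; iterating $L$ times gives $F(S^*) - F(S_L) \le (1-\alpha/L)^L F(S^*) \le e^{-\alpha}F(S^*)$, which rearranges to the claimed bound $F(S_L) \ge \bigl(1 - e^{-(1-1/e)}\bigr)F(S^*)$.

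The main obstacle I anticipate is justifying submodularity cleanly at the outer level, since the outer ``elements'' are heterogeneous structured objects (rows with different heads and tail sets) and the constraint that documents may not repeat across rows means the ground set technically shrinks as the ranking grows. I would handle this by working with a fixed universal ground set consisting of all candidate rows over the original document collection, noting that the dependence of any row's marginal on earlier rows enters only through the already-accumulated argument of $g$, and then appealing once more to the concave-of-modular composition to recover submodularity on this universal ground set. Once that bookkeeping is in place, the telescoping inequality above applies verbatim, paralleling Hochbaum and Pathria's \cite{hoc:98} two-level extension of classical set cover.
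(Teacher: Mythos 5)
Your proposal is correct and follows essentially the same route as the paper's proof: establish that the inner greedy loop returns a row whose marginal gain is within a factor $\beta = 1-\tfrac{1}{e}$ of the best possible row, then run the standard Nemhauser--Wolsey--Fisher telescoping argument at the outer level with this $\beta$-approximate oracle to obtain $1 - e^{-\beta}$. If anything, your write-up is more explicit than the paper's about why the concave-of-modular objective is normalized, monotone, and submodular at both levels, and about the ground-set bookkeeping for rows, which the paper simply asserts.
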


\begin{proof}
 The submodular function in question is denoted $f$.
 We have $f$ normalized since the utility of the empty ranking is 0. Further, $f$ is 
 monotonic since the score can only increase on adding more rows.
 
 Let $S_i$ be the solution of the method after $i$ iterations of the outer step of the greedy algorithm.
 Let $OPT$ be the optimal solution to the problem with $k$ elements.
First, we define 
 \begin{equation} \label{eq:step1}
  \delta_i = f(S_i) - f(S_{i-1}).
 \end{equation}
 By monotonicity of the function $f$, we have: 
\begin{equation} \label{eq:step2}
\forall i, f(OPT) \leq f(S_i \cup OPT).                           
\end{equation}
 Since at every step we greedily pick the best element, by submodularity we get: 
\begin{equation} \label{eq:step3}
f(S_i \cup OPT) \leq f(S_i)+k \delta_{i+1}. 
\end{equation}
The above inequality follows from the fact that adding the elements of $OPT$ to the current solution has no more benefit than $k$ times the benefit achieved by the current best element.  

In the problem that we are considering, finding the best element (i.e., the inner step to get $S_i$ from $S_{i-1}$) is submodular as well. Hence, we are not assured of finding the best element to add to $S_{i-1}$; we can only obtain a $\beta$-approximate solution (where $\beta = 1 - \frac{1}{e}$). Thus we have $\delta_i = \beta \times \delta_i^{best}$. In this case the inequality (\ref{eq:step3}) becomes:
\begin{align} 
f(S_i \cup OPT) &\leq f(S_i)+k\delta_{i+1}/\beta,   \nonumber
\end{align}
which along with (\ref{eq:step2}) gives,
\begin{align}
 \delta_{i+1} &\geq \beta(f(OPT) - f(S_i))/k.  \nonumber
\end{align}
The above inequality in conjunction with (\ref{eq:step1}) implies,
\begin{align}
 f(S_{i+1}) &\geq f(S_{i}) + \beta(f(OPT) - f(S_i))/k \nonumber \\
  &= (1-\frac{\beta}{k})f(S_i) + \frac{\beta}{k} f(OPT). \label{eq:step7}
\end{align}
Using the above inequality, we can show by induction that $f(S_i) \geq (1 - (1- \frac{\beta}{k})^i)f(OPT)$. The base case with $i=1$ can be easily shown. For the induction step, using the inequality (\ref{eq:step7}) and the induction hypothesis we get:
\begin{align}
 f(S_{i+1}) &\geq (1-\frac{\beta}{k})(1 - (1- \frac{\beta}{k})^i)f(OPT) + \frac{\beta}{k} f(OPT) \nonumber\\
 &= (1- (1- \frac{\beta}{k})^{i+1})f(OPT).  \nonumber
\end{align}
Thus, after $k$ steps, the final solution $S$ satisfies $f(S) \geq (1- (1- \frac{\beta}{k})^{k})f(OPT)$
which implies $f(S) \ge (1 - e^{-\beta}) f(OPT)$. We get the required result by substituting the value of $\beta$. \end{proof}

%
%


%
%

\def\bw{{\bf w}}
\def\bpsi{ \Psi}

\section{Learning Dynamic Rankings}
\label{sec:ourapproach}
In the previous section, we showed that a dynamic ranking can be efficiently computed when all the intents and relevance judgments for a given query are known. Of course, in practice these are not available. In this section, we therefore propose a learning algorithm that can predict dynamic rankings on previously unseen queries. Following the approach of Yue and Joachims \cite{yue:08}, our algorithm makes use of word-level features to discriminatively learn a model of the intent distribution for new queries. In particular, given a training set of documents with known intents, our algorithm learns the weight vector of a linear discriminant function. This discriminant function can then be used in Algorithm~\ref{alg1} as a substitute for intents and relevance judgments in order to predict dynamic rankings on unseen queries.

We now describe our learning approach, which is based on structural SVMs \cite{tso:05}. Our goal here is to learn a mapping from a query $q$ to a dynamic ranking $\Theta$. We pose this as the problem of learning a weight vector $\bw \in \mathbb{R}^N$  from which we can make a prediction as follows:
\begin{align}
\label{eq:jfm}
h_{\bf w}(q) = \argmax_{\Theta}~\bw^\top \bpsi(q,\Theta).
\end{align}
In the above equation $\bpsi(q,\Theta) \in  \mathbb{R}^N$ is a joint feature-map between the candidate set of documents ${\cal D}(q)$ and ${\Theta}$ given a query $q$.\footnote{Strictly, the joint feature-map should be $\bpsi({\cal D}(q),\Theta|q)$ for a  given  query $q$. For brevity, we simply denote this as $\bpsi(q,\Theta)$.}

In the structural SVM framework, given a set of training examples, $(q^i, {\Theta^i})_{i=1}^n$, a discriminant function is obtained by minimizing the empirical risk $\frac{1}{n}\sum_{i=1}^n \Delta(\Theta^i, h_{\bw}(q^i))$ where  $\Delta$ is a loss function.


The above equation requires the knowledge of $\Theta^i$ in order to compute empirical risk. However, in practice, we are not given $\Theta^i$ with the training documents. Assuming that we are given $(q^i, {\cal D}(q^i),{\cal T}(q^i), \mathbf{P}[t|q]: t \in {\cal T}(q^i))_{i=1}^n$, we first compute a dynamic ranking $\Theta^i$ that maximizes the utility $U_g$ (approximately) from  Algorithm~\ref{alg1}. These $\Theta^i$'s will be treated as the training examples in the rest of this section. 

A key aspect of structural SVMs is to appropriately define the joint-feature map for the problem at hand. For our problem, the joint-feature map in (\ref{eq:jfm}) is defined such that:
\begin{align}
\label{eq:discriminant}
 { \bf w}^\top \bpsi(q, \Theta)  
\:&:=& \!\!\!\!\!\!\!\!\!\sum_{v \in V_{{\cal D}(q)}} \!\!\!\!{\bf w}_v^\top \phi_{v} U_{g}(\Theta|v) \:+ \!\!\!\!\!\!\!\!\!\!\!\sum_{s  \in V_{{{\cal D}(q)} \times {{\cal D}(q)}}} \!\!\!\!\!\!\!\!\!\!\bw_{s}^\top \phi_{s}(\Theta) ,
\end{align}
where $V_{{\cal D}(q)}$ denotes an index set over the  words in the candidate set ${\cal D}(q)$. The vector $\phi_v$ denotes word-level features (for example, how often a word occurs in a document) for the word corresponding to index $v$. The utility $U_{g}(\Theta|v)$ is analogous to (\ref{eq:util_t}) but is now over the words in the vocabulary (rather than over intents). In particular, a document provides utility $U(d|v)$ for a word $v$, if that word occurs in the document. The word-level features are reminiscent of the features used in diverse subset prediction \cite{yue:08}. The key assumption is that the words in a document are correlated with the intent. This seems natural since documents relevant to the same intent are likely to share more words than documents that are relevant to different intents.

The second term in Eq.~\ref{eq:discriminant} captures the similarity between head and the tail documents. In this case, $V_{{\cal D}(q) \times {\cal D}(q)}$ denotes an index set over all document pairs in ${\cal D}(q)$. Consider a particular index $s$ that corresponds to documents $d_1$ and $d_2$ in ${\cal D}(q)$. $\phi_{s}(\Theta)$ is a feature vector describing the similarity between $d_1$ and $d_2$ in $\Theta$ when  $d_1$ is a head document in $\Theta$ and $d_2$ occurs in the same row as $d_1$. If either $d_1$ is not a head document in $\Theta$ or when $d_2$ is not in the same row as $d_1$,  $\phi_{s}(\Theta)$ is simply a vector of zeros.  An example of a feature in $\phi_{s}(\Theta)$ that captures the similarity between two documents is their TFIDF cosine. In the second term, the diminishing returns property does not hold strictly. However, it is easy to see that this term is modular ({\em i.e.,} Equation (\ref{eq:submodular}) holds with equality) and hence our greedy algorithm and its guarantee still hold even with these similarity features.    

From the above defintion of the feature-map (\ref{eq:discriminant}), it is clear that $ {\bf w}^\top \bpsi(q, \Theta)$ models the utility of a given dynamic ranking $\Theta$. Thus, a good discriminant function must give a higher value to rankings with higher utility. This is achieved by solving the following structural SVM optimization problem for ${\bf w}$ \cite{tso:05}:
\begin{align}
\label{eq:ssvm}
 \min_{{\bf w},\xi \ge 0}~&\frac{1}{2}||{\bf w}||^2 + \frac{C}{n} \sum_{i=1}^{n}{\xi_i}  \\
\text{s.t.}~ 
 & {\bf w}^\top  \bpsi(q^i,\Theta^i) - \bw^\top \bpsi(q^i,\Theta) \geq  \Delta(\Theta^i,\Theta|q^i) - \xi_i, \nonumber \\
  &~~~~~~~~~ \hfill \forall \Theta \neq \Theta^i,~~ \forall~ 1 \le i \le n \nonumber.
\end{align}
The constraints in the above formulation ensure that the predicted utility for the dynamic ranking $\Theta^i$ is
higher than the predicted utility for any other $\Theta$.  The objective function in (\ref{eq:ssvm}) minimizes the empirical risk while maximizing margin. 
The risk and the margin are traded off by the scalar parameter $C>0$. The loss between $\Theta^i$ and $\Theta$ is given by: 
\begin{align}
\Delta(\Theta^i,\Theta|q^i) := 1 - \frac{U_g(\Theta|q^i)}{U_{g}(\Theta^i|q^i)}. \nonumber
\end{align}
The above definition ensures that the loss is zero when $\Theta = \Theta^i$. It is easy to see that a dynamic ranking $\Theta$ has a 
large loss when its utility is low compared to the utility of $\Theta^i$. 

The quadratic program in Eq.~\ref{eq:ssvm} is convex and it can be solved efficiently using a cutting-plane algorithm \cite{joa:09,tso:05}. Even though Eq.~(\ref{eq:ssvm}) has an exponential number of constraints, the cutting-plane algorithm can be shown to always terminate in polynomial time \cite{joa:09,tso:05}. In each iteration of the cutting-plane algorithm, the most violated constraints in (\ref{eq:ssvm}) are added to a working set and the resulting quadratic program is solved.  Given a current ${\bf w}$, the most violated constraints are obtained by solving:
\begin{align}
\label{eq:most-violated}
\argmax_{\Theta}~{\bf w}^\top \bpsi(q^i,\Theta) + \Delta(\Theta^i,\Theta|q^i) 
\end{align}
It is easy to see that Algorithm~\ref{alg1} can be used to solve this problem, even thought the formal approximation guarantee does not hold in this case. While the original structural SVM was proposed for exact inference in Eq.~(\ref{eq:most-violated}), the approach has been shown effective \cite{yue:08,fin:08} even when only approximate inference is possible. Once a weight vector $\bw$ is obtained, the dynamic ranking for a test query can be obtained from Eq.~\eqref{eq:jfm}.

%
%
%

\section{Experiments}
\label{sec:experiments}
This section explores the properties of our two-level ranking method empirically. In particular, we first investigate how the choice of concave function $g$ impacts diversity and depth. We also compared against several static and dynamic baselines, and finally evaluate how accurately two-level dynamic rankings can be learned using the Structural SVM method.
 
All experiments were conducted on two datasets, namely,  the TREC 6-8 Interactive Track (TREC) and the Diversity Track of TREC 18 using the ClueWeb collection (WEB). Each query in TREC contains between 7 to 56 different manually judged intents. In the case of WEB, we used 28 queries with 4 or more intents.
Unless noted otherwise, we consider the probability $\mathbf{P}[t]$ of an intent proportional to the number of documents relevant to that intent. Key statistics describing the two datasets are provided in Table \ref{table:datasets}. Note that the two datasets differ vastly in terms of some criteria, therefore spanning a wide range of application scenarios. In particular, the most prevalent intent covers $73.4\%$ of all queries in the WEB dataset, while the most prevalent intent in TREC is far less dominating with $37.6\%$.

Unless noted otherwise, the number of documents in the first-level ranking is set to 5. The width of the second-level rankings is set to 2 (i.e. one head document plus 2 second-level results). For simplicity, we chose all factors $\gamma_i$ and $\gamma_{ij}$ in Equations~(\ref{eq:metric1}) and (\ref{eq:util_t}) to be $1$. Further, we chose $U(d|t)=1$ if document $d$ was relevant to intent $t$ and set $U(d|t)=0$ otherwise.

\begin{table}
\begin{center}
    \addtolength{\tabcolsep}{-0.5mm}
\begin{tabular}{ l | c | c}
Statistic &  TREC  &  WEB\\
\hline 
No. of queries & 17 & 28\\
Avg. \# of documents per query & 46.3 & 76.1\\
Avg. \# of intents per query  &  20.8 & 4.5\\
Avg. \# of docs with $>1$ intent per query &  9.6 & 25.6\\
Frac. of docs with $>1$ intent per query &  0.21 & 0.34\\
Avg. \# of intents per document &  1.33 & 1.41\\
Frac. of docs on prevalent intent &  0.376  & 0.734\\
\end{tabular}
\caption{ \label{table:datasets} Key statistics of the two datasets.} 
\end{center}
\end{table}

\subsection{Controlling Diversity and Depth}
\label{sec:exconcave}
The key design choice of our family of utility measures is the concave function $g$. Since Algorithm \ref{alg1} directly optimizes utility, we first explore how different choices of $g$ affect various properties of the two-level rankings produced by our method.



We experiment with four different concave functions $g$, each providing a different diminishing-returns model. At one extreme, we have the identity function $g(x)=x$ which corresponds to modular returns (i.e. Eq.~(\ref{eq:submodular}) holds with equality). Using this function in Eq.~(\ref{eq:metric1}) leads to the intent-aware Precision measure proposed in \cite{agra:09}, and it is the only function considered in \cite{brandt:11}. We therefore refer to this function as PREC. It is not hard to show that Algorithm \ref{alg1} actually computes the optimal two-level ranking for this choice of $g$.  On the other end of the spectrum, we study $g(x) = \min(x,2)$. By remaining constant after two, this function discourages presenting more than two relevant documents for any intent. The measure obtained using this function in Eq.~(\ref{eq:util_t}) will be referred to as SAT2 (short for ``satisfied after two'').  In between these two extremes, we study the square root function (SQRT) $g(x) = \sqrt{x}$ and the log function (LOG) $g(x) = \log(1+x)$.  A plot of all four functions is shown in Figure \ref{fig:concavecomp}.



\begin{figure}[t]
\begin{center}
\includegraphics[width=0.3\textwidth]{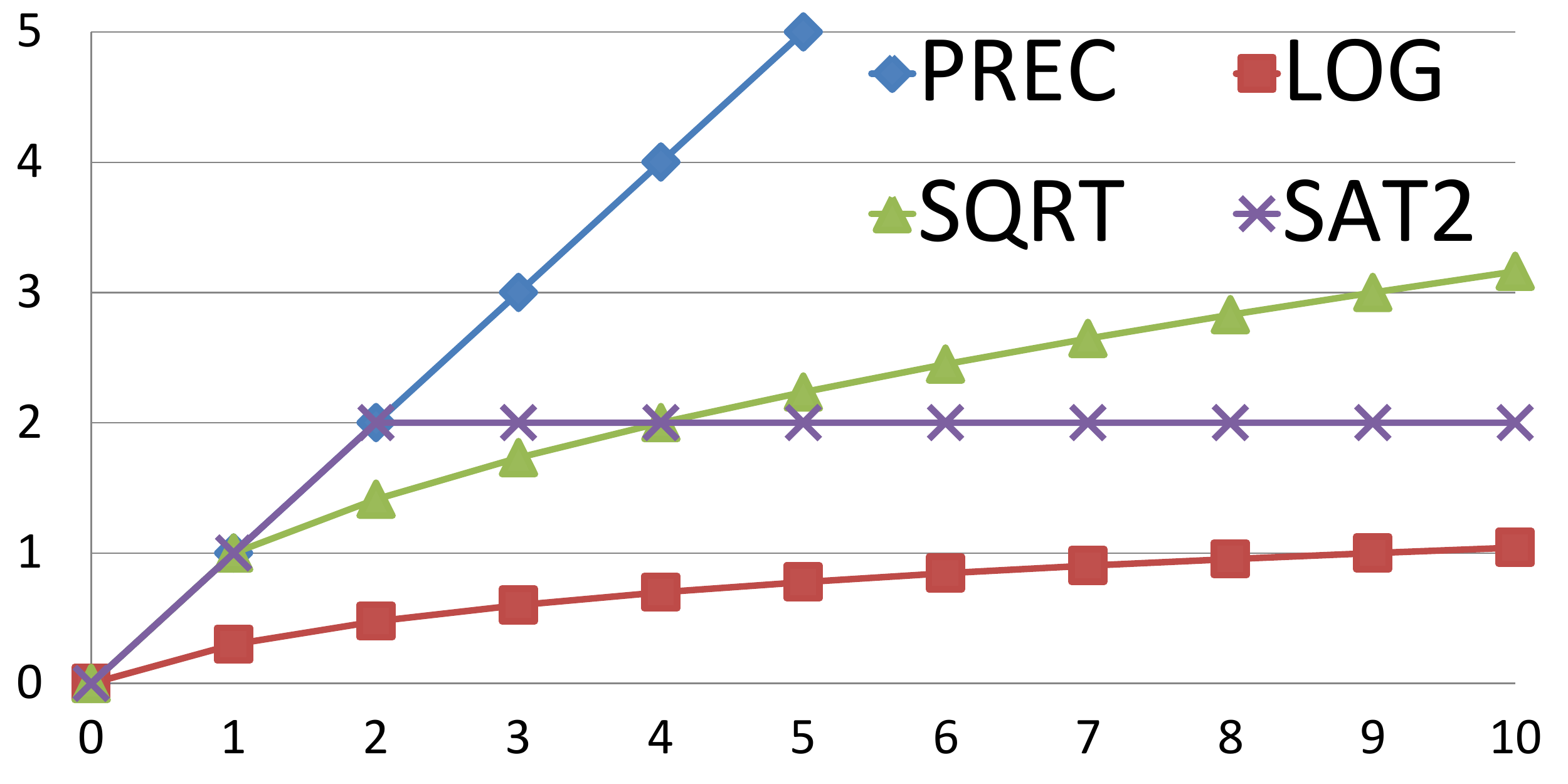}
\caption{Illustrating the diminishing-returns properties of four concave functions.}
\label{fig:concavecomp}
\end{center}
\end{figure}



To explore how dynamic rankings differ for different choices of $g$, we used Algorithm~\ref{alg1} to compute the two-level rankings (approximately) maximizing the respective measure for known relevance judgments $\UDoc{t}{d}$ and $\mathbf{P}[t|q]$. Figure~\ref{fig:topiccoveragelevelone} shows how $g$ influences diversity.  The left-hand plot shows how many different intents are represented in the top 5 results of the first-level ranking on average. The graph shows that the stronger the diminishing-returns model, the more different intents are covered in the first-level ranking. In particular, the number of intents almost doubles on both dataset when moving from PREC to SAT2. In return, the number of documents on the most prevalent intent in the first-level ranking decreases, as shown in the right-hand plot. This illustrates how the choice of $g$ can be used to elegantly control the desired amount of diversity in the first-level ranking.


\begin{figure}[t]
\begin{center}
\includegraphics[width=0.235\textwidth]{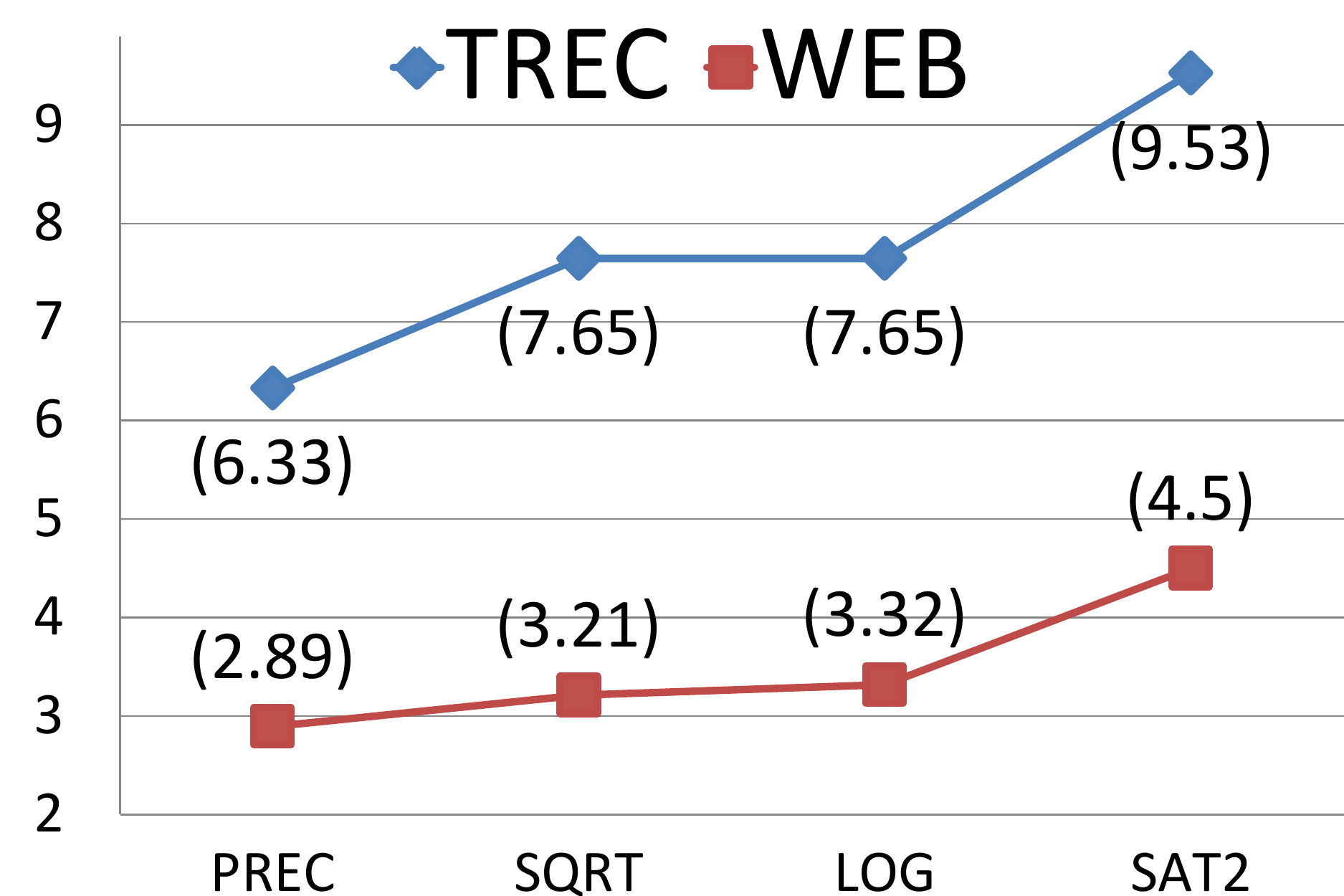}\hfill
\includegraphics[width=0.235\textwidth]{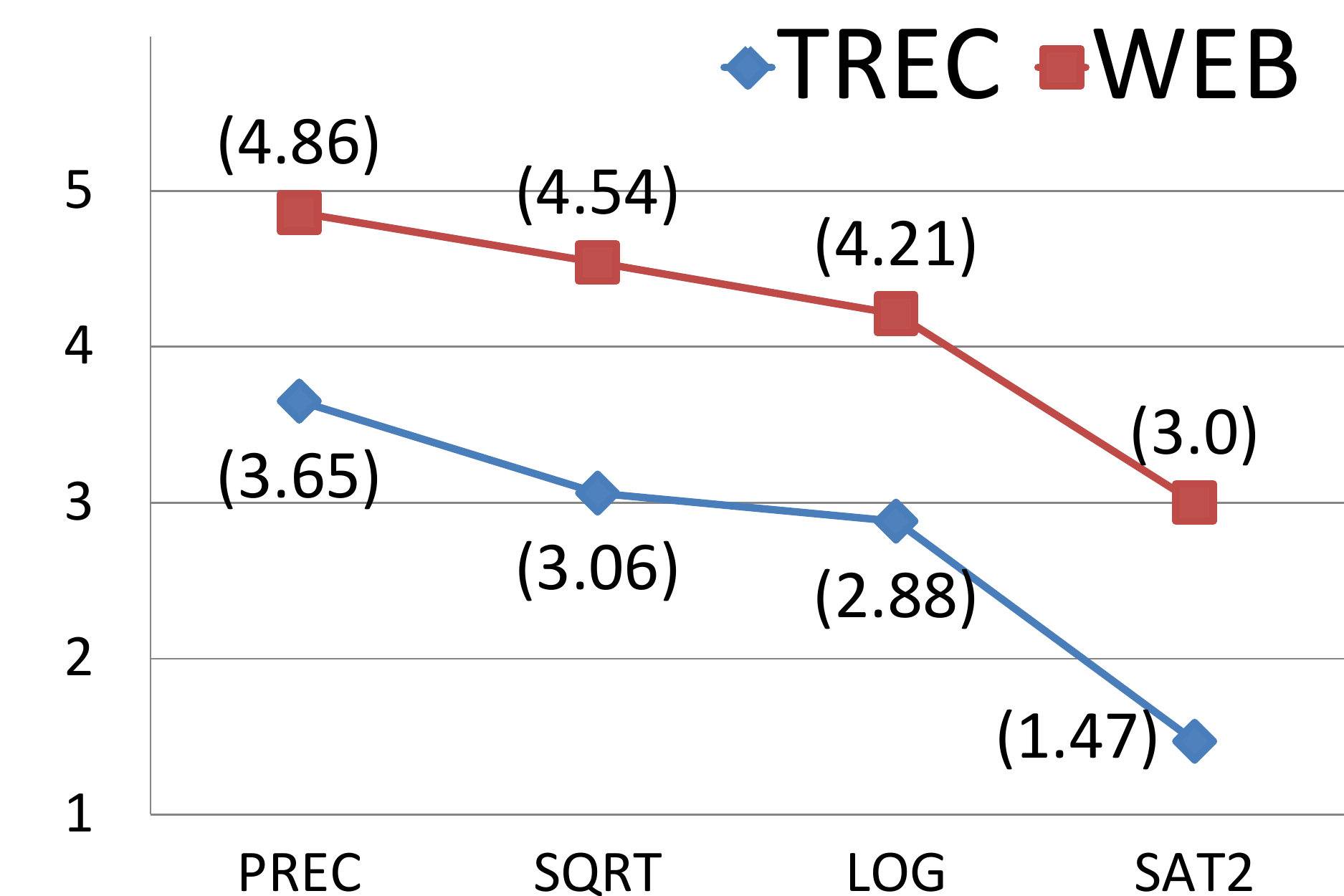}
\caption{Average number of intents covered (left) and average number of documents for prevalent intent (right) in the first-level ranking.}
\label{fig:topiccoveragelevelone}
\end{center}
\end{figure}

Tables \ref{tab:optvalues_Trec} (TREC) and \ref{tab:optvalues_web} (WEB) provide further insight into the impact of $g$, now also including the contributions of the second-level rankings. The rows correspond to different choices for $g$ when evaluating expected utility according to Eq.~(\ref{eq:util_t}), while the columns show which $g$ the two-level ranking was optimized for. Not surprisingly, the diagonal entries of Tables \ref{tab:optvalues_Trec} and \ref{tab:optvalues_web} show that the best performance for each measure is obtained when optimizing for it. The off-diagonal entries show that different $g$ used during optimization lead to substantially different rankings. This is particularly apparent when optimizing the two extreme performance measures PREC and SAT2; optimizing one invariably leads to rankings that have a low value of the other. In contrast, optimizing LOG or SQRT results in much smoother behavior across all measures, and both seem to provide a good compromise between depths (for the prevalent intent) and diversity. 




\begin{table}[t]
\begin{center}
\begin{tabular}{ c || c c c c}
\hline
\backslashbox{Eval.}{Optim.} & PREC & SQRT & LOG & SAT2\\
\hline
PREC   & \textbf{0.315} & 0.302          & 0.294          & 0.164         \\
SQRT & 1.612          & \textbf{1.664} & 1.659          & 1.333         \\
LOG  & 1.216          & 1.267          & \textbf{1.27}  & 1.046         \\
SAT2 & 1.18           & 1.335          & 1.349          & \textbf{1.487}\\
\hline
\end{tabular}
\caption{Performance when optimizing and evaluating using different performance measures for TREC.}
\label{tab:optvalues_Trec}
\end{center}
\end{table}

\begin{table}[t]
\begin{center}
\begin{tabular}{ l || c c c c}
\hline
\backslashbox{Eval.}{Optim.} & PREC & SQRT & LOG & SAT2\\
\hline
PREC   & \textbf{0.746} & 0.731          & 0.714          & 0.443         \\
SQRT & 3.083          & \textbf{3.132} & 3.118          & 2.472         \\
LOG  & 2.236          & 2.297          & \textbf{2.303} & 1.908         \\
SAT2 & 1.773          & 1.882          & 1.892          & \textbf{1.984}\\
\hline
\end{tabular}
\caption{Same as Table \ref{tab:optvalues_Trec} for WEB.} 
\label{tab:optvalues_web}
\end{center}
\end{table}

\label{sec:results}

%
%

\subsection{Static vs. Dynamic Ranking} \label{sec:statvsdyn}

The ability to simultaneously provide depth and diversity was a key motivation for our dynamic ranking approach over conventional static rankings. We now evaluate whether this goal is indeed achieved. We compare the two-level rankings produced by Algorithm~\ref{alg1} (denoted \emph{Dyn}) with several static baselines. 

First, we compare against a diversity-only static ranking, namely the static rankings obtained by maximizing intent coverage using the set-coverage algorithm proposed in \cite{yue:08} (denoted \emph{Stat-Div}). Second, we compare against a depth-only static ranking, namely the static ranking that optimizes utility with $g$ chosen to be the identity function (denoted \emph{Stat-Depth}). Note that Algorithm~\ref{alg1} can be used for this purpose by setting the width of the second-level rankings to $0$. And, third, we similarly use Algorithm~\ref{alg1} to produce static rankings that optimize SQRT, LOG, and SAT2 (denoted \emph{Stat-Util}). Note that both Dyn and Stat-Util optimize the same measure that is used for evaluation.


\begin{figure}[t]
\begin{center}
\includegraphics[width=3.5in,height=1in ]{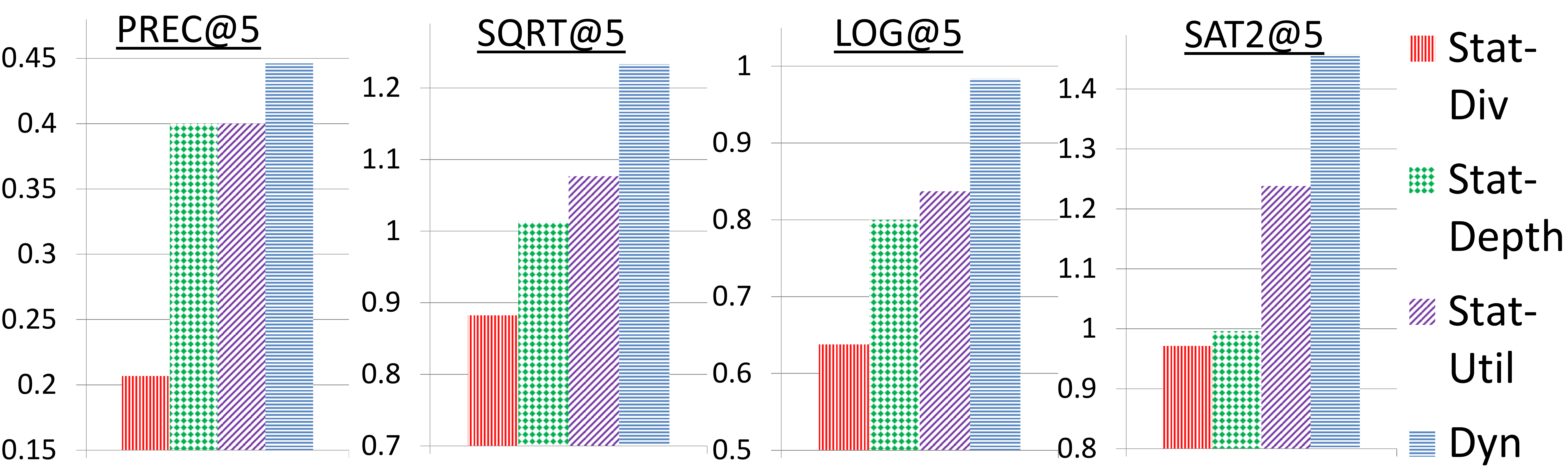}
\caption{Comparing the retrieval quality of Static vs. Dynamic Rankings for TREC.}
\label{fig:statdynopttrec}
\end{center}
\end{figure}

\begin{figure}[t]
\begin{center}
\includegraphics[width=3.5in,height=1in ]{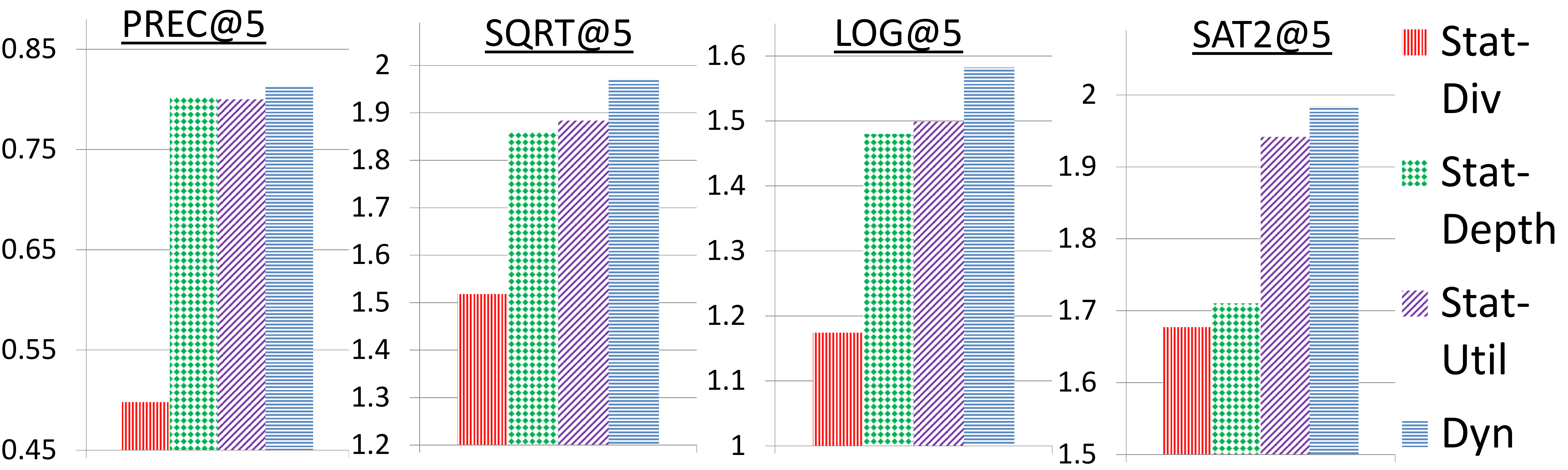}
\caption{Same as Figure \ref{fig:statdynopttrec} for WEB.}
\label{fig:statdynoptweb}
\end{center}
\end{figure}

 
To make a fair comparison between static and dynamic rankings, we measure performance in the following way. For static rankings, we compute performance using the expectation of Eq.~(\ref{eq:metric1}) at a depth cutoff of $5$. In particular, we measure PREC@5, SQRT@5, LOG@5 and SAT2@5.
 For two-level rankings, the number of results viewed by a user depends on how many results he or she expands. So, we truncate any user's path through the two-level ranking after visiting $5$ results and compute PREC@5, SQRT@5, LOG@5 and SAT2@5 for the truncated path.




 Results of these comparisons are shown in Figures \ref{fig:statdynopttrec} and \ref{fig:statdynoptweb}. First, we see that both Dyn and Stat-Util outperform Stat-Div, illustrating that optimizing rankings for the desired evaluation measure leads to much better performance than using a proxy measure as in Stat-Div. Note that Stat-Div never tries to present more than one result for each intent, which explains the extremely low ``depth'' performance in terms of PREC@5. But Stat-Div is not competitive even for SAT2, since it never tries to provide a second result for the more prevalent intents. Second, at first glance it may be surprising that Dyn outperforms Stat-Depth even on PREC@5, despite the fact that Stat-Depth explicitly (and globally optimally) optimizes depth. As an explanation, consider the following situation where A is the prevalent intent, and there are three documents relevant to A and B and three relevant to A and C. Putting those sets of three documents into the first two rows of the dynamic ranking provides better PREC@5 than sequentially listing them in the optimal static ranking.

Overall, Figures~\ref{fig:statdynopttrec} and \ref{fig:statdynoptweb} show that the dynamic ranking method outperform all static ranking schemes on all the metrics -- in many cases with a substantial margin. This gain is more pronounced for TREC than for WEB. This can be explained by the fact that WEB queries are less ambiguous, since the single most prevalent intent accounts for more than 70\% of all queries on average.

%
%

\subsection{Width of Second-Level Rankings} 


 In the previous experiments the width of the second-level rankings was limited to $2$. To study the effect of width, we varied it from 0 (i.e. single-level, static) to 4. In each case we obtained a dynamic ranking optimized for the respective measure from Algorithm \ref{alg1}. We again use the truncated metrics as defined in Section~\ref{sec:statvsdyn} for evaluation. The results are shown in Figure~\ref{fig:nonlearnwidth}. Performance generally increases with increasing width on TREC. However, note that increasing width for SAT2 does not help much beyond width $1$, which is to be expected. The improvements from increased width are less strong on WEB, where not much gain is provided beyond width $1$. Again, this can be explained by the lower amount of query ambiguity.

\begin{figure}[!ht]
\begin{center}
\includegraphics[width=0.235\textwidth]{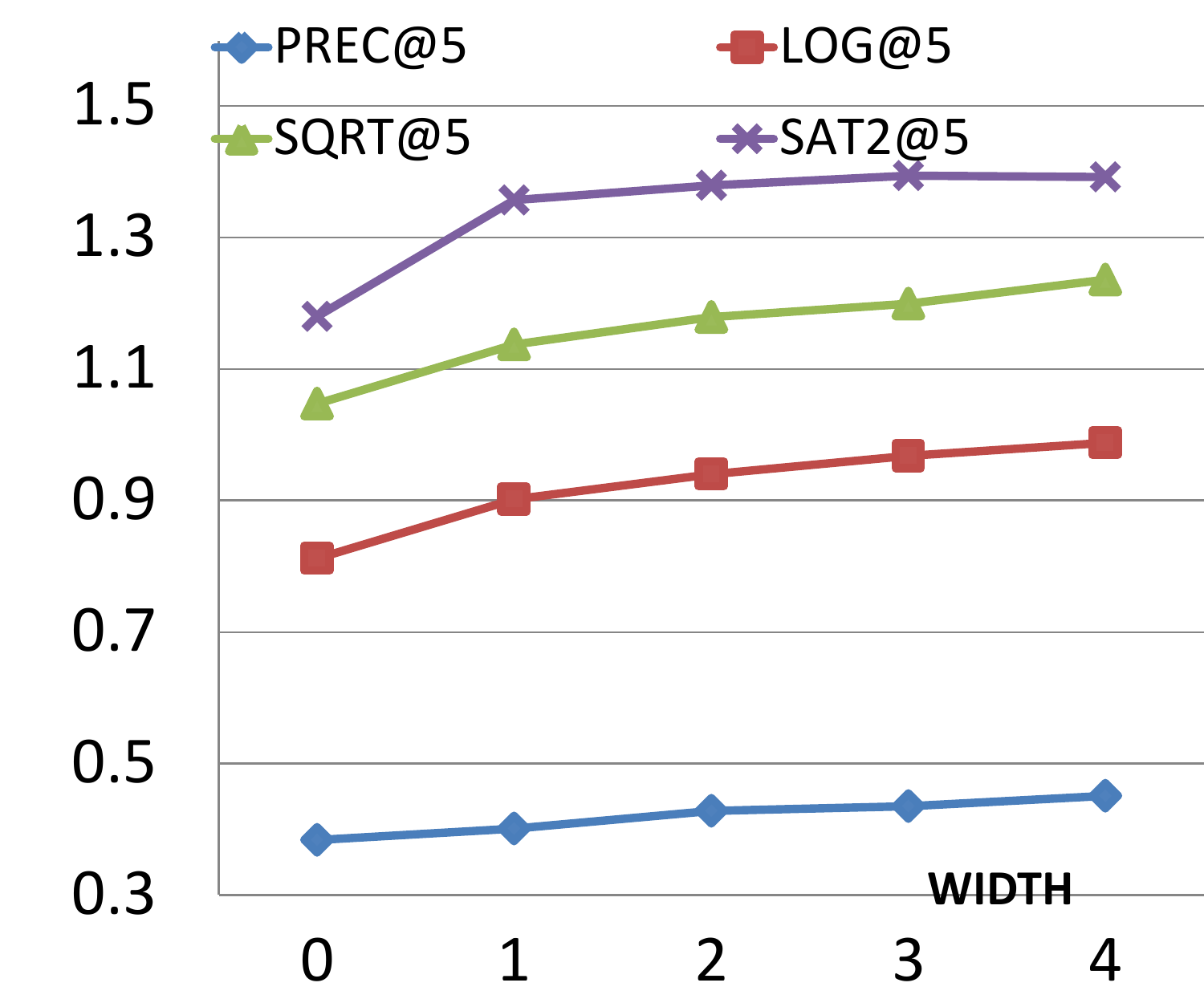}\hfill
\includegraphics[width=0.235\textwidth]{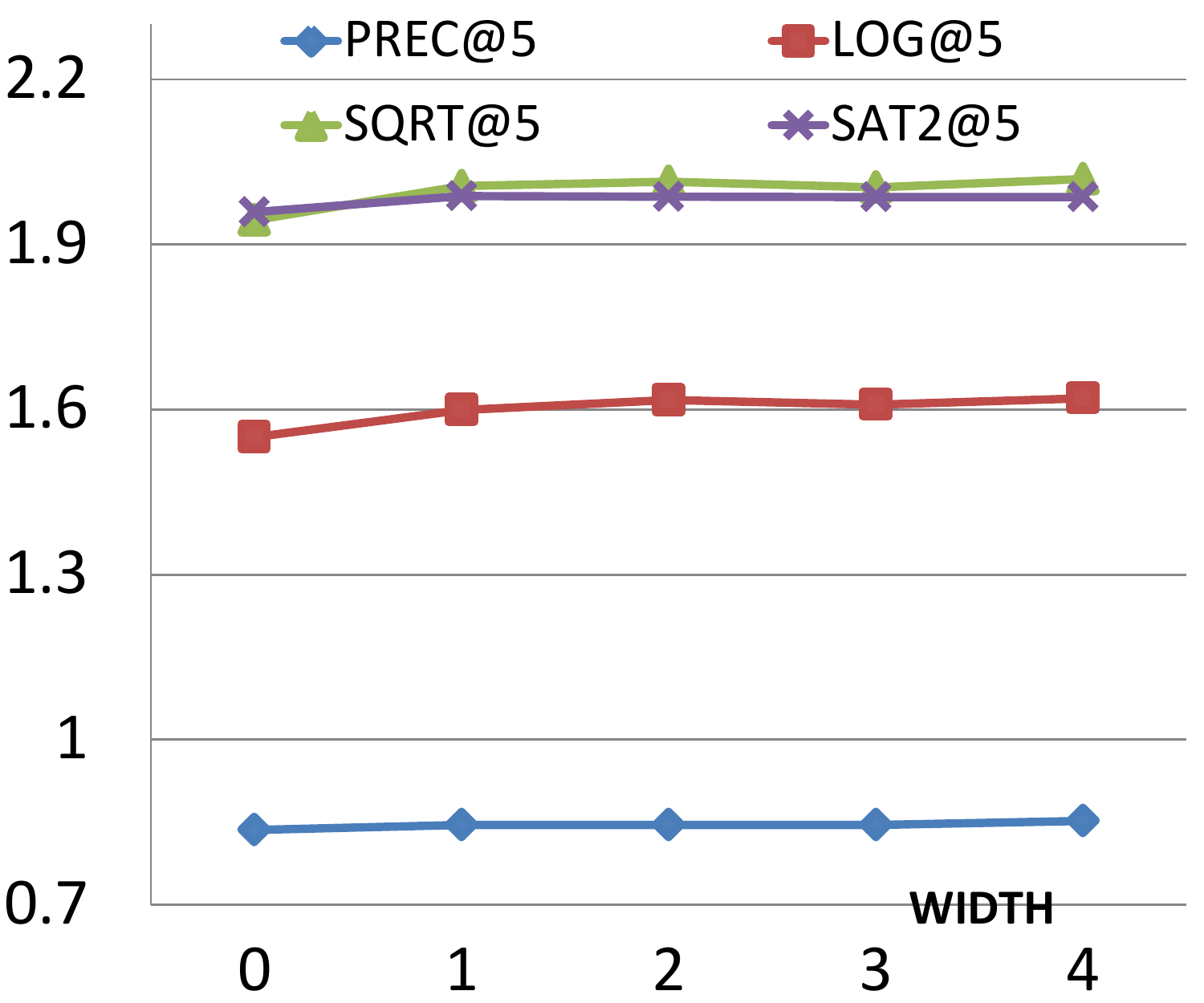}
\caption{Retrieval performance when the width of the second-level ranking is varied for TREC (left) and WEB (right).}
\label{fig:nonlearnwidth}
\end{center}
\end{figure}

%
%



%
%

\subsection{Learning Two-level Ranking Functions}

So far we have evaluated  how far Algorithm~\ref{alg1} can construct effective two-level rankings if the relevance ratings are known. We now explore  how far our learning algorithm can predict two-level rankings for previously unseen queries.

For all experiments in this section, we learn and predict using SQRT as the choice for $g$, since it provides a good tradeoff between diversity and depth as shown above. We performed cross-validation as follows and report test-set performance averaged over all splits. 

For TREC, each test set consisted of a single held-out query. For each remaining set of 16 queries, 4 further splits were made such that 12 were used for training and 4 were used for validation. For WEB, we divided the data into 28 splits of 16 training, 8 validation and 4 testing. Queries were split such that all the queries were equally often in the training, test and validation sets respectively. The $C$ parameter of the structural SVM was varied from $10^{-5}$ to $10^{-1}$. The $C$ value corresponding to the best performance on the validation set was picked for each split. 

To compute features, we performed standard preprocessing such as tokenization, stopword removal and Porter stemming. Since the focus of our work is about diversity and not about relevance, we rank only those documents that are relevant to at least one intent of a query. This simulates a candidate set that may have been provided by a conventional retrieval method. This setup is similar to that used by Yue and Joachims \cite{yue:08}.

 Many of our features in $\phi_v$ follow those used in \cite{yue:08}. These features provide information about the importance of a word in terms of two different aspects. A first type of feature describes the overall importance of a word. These features capture, for example, the intuition that a word appearing in 10 documents in the candidate set is more important than a word appearing in only one.  Examples include:
\begin{itemize*}
 \item Word appears in at least $x\%$ of the documents?
 \item Word appears in the title of at least $x\%$ of the documents?
\end{itemize*}

A second type of feature describes the importance of a word in a particular document. This captures the intuition that a word that appears 10 times in a document is more important than a word that appears only once. Examples include:
\begin{itemize*}
 \item Word appears with frequency of at least $y\%$ within the document?
 \item Word that appears in $x\%$ of the documents, appears with frequency of at least $y\%$ within the document? 
\end{itemize*}

Finally, we also use features $\phi_s$ that model the relationship between the documents in the second-level ranking and the corresponding head document of that row.  Examples of this type of feature include:
\begin{itemize*}
 \item TFIDF similarity of both documents, binned into multiple binary features,
 \item Number of common words that appear in both documents with frequency at least $x\%$.
\end{itemize*}

\begin{figure}[t]
\begin{center}
\includegraphics[width=3.5in,height=1in ]{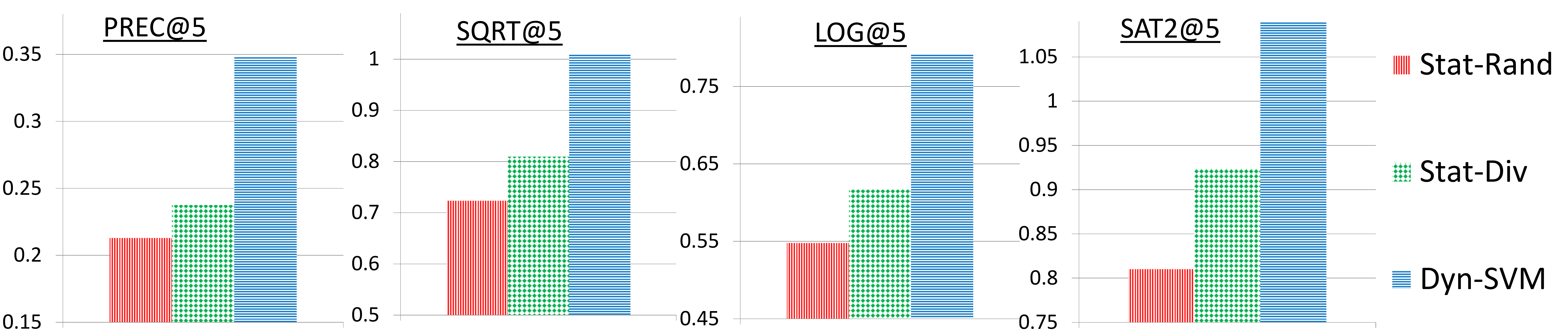}
\caption{Performance of learned retrieval functions, comparing static vs. dynamic rankings for TREC.}
\label{fig:statdyntrec}
\end{center}
\end{figure}

\begin{figure}[t]
\begin{center}
\includegraphics[width=3.5in,height=1in ]{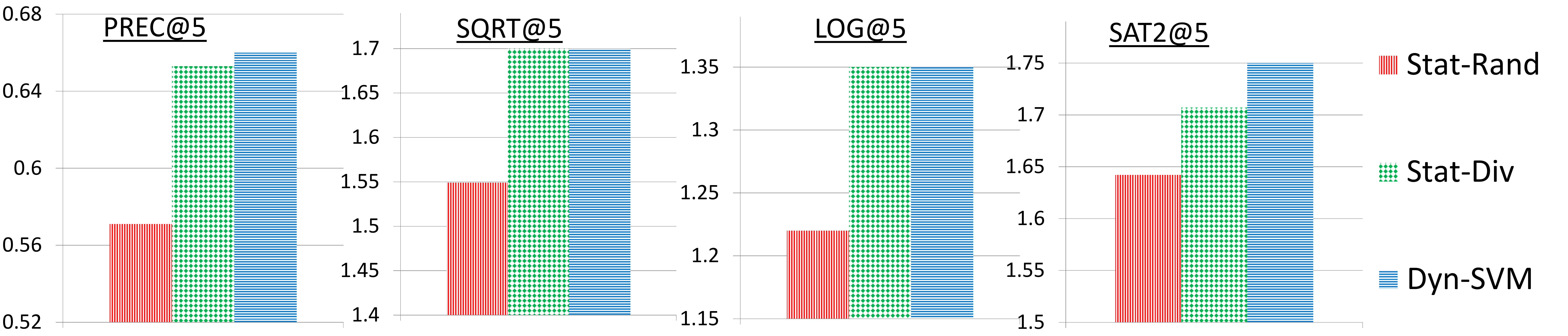}
\caption{Same as Figure \ref{fig:statdyntrec} for WEB.}
\label{fig:statdynweb}
\end{center}
\end{figure}

\paragraph{Dynamic vs. Static} In the first set of experiments, we compare our learning method (\emph{Dyn-SVM}) for two-level rankings with two static baseline. The first static baseline is the learning method from \cite{yue:08} which optimizes diversity, and is henceforth referred to as \emph{Stat-Div}.
 It is one of the very few learning methods for learning diversified retrieval functions, and it was shown to outperform non-learning methods like Essential Pages \cite{swam:08}.
 We also consider a \emph{random} static baseline (referred to as \emph{Stat-Rand}), which randomly orders the candidate documents.
 This is a competent baseline, since all our candidate documents are relevant for at least one intent.

 Figure~\ref{fig:statdyntrec} shows the comparison between static and dynamic rankings for TREC. \emph{Dyn-SVM} substantially outperforms both static baselines across all performance metrics, mirroring the results we obtained in Section~\ref{sec:statvsdyn} where the relevance judgments were known. This shows that our learning method can effectively generalize the multi-intent relevance judgments to new queries. On the less ambiguous WEB dataset, Figure~\ref{fig:statdynweb} shows again that the differences between static and dynamic rankings are smaller. While \emph{Dyn-SVM} substantially outperforms \emph{Stat-Rand}, \emph{Stat-Div} is quite competitive on WEB.

\paragraph{Learning vs. Heuristic Baselines} We also compare against alternative methods for constructing two-level rankings. In particular, we extend the static baselines \emph{Stat-Rand} and \emph{Stat-Div} using the following heuristic. For each result in the static ranking, we add a second-level ranking using the documents with the highest TFIDF similarity from the candidate set. This results in two dynamic baselines, which we call \emph{Dyn-Rand} and \emph{Dyn-Div}.

\begin{figure}[t]
\begin{center}
\includegraphics[width=0.5\textwidth]{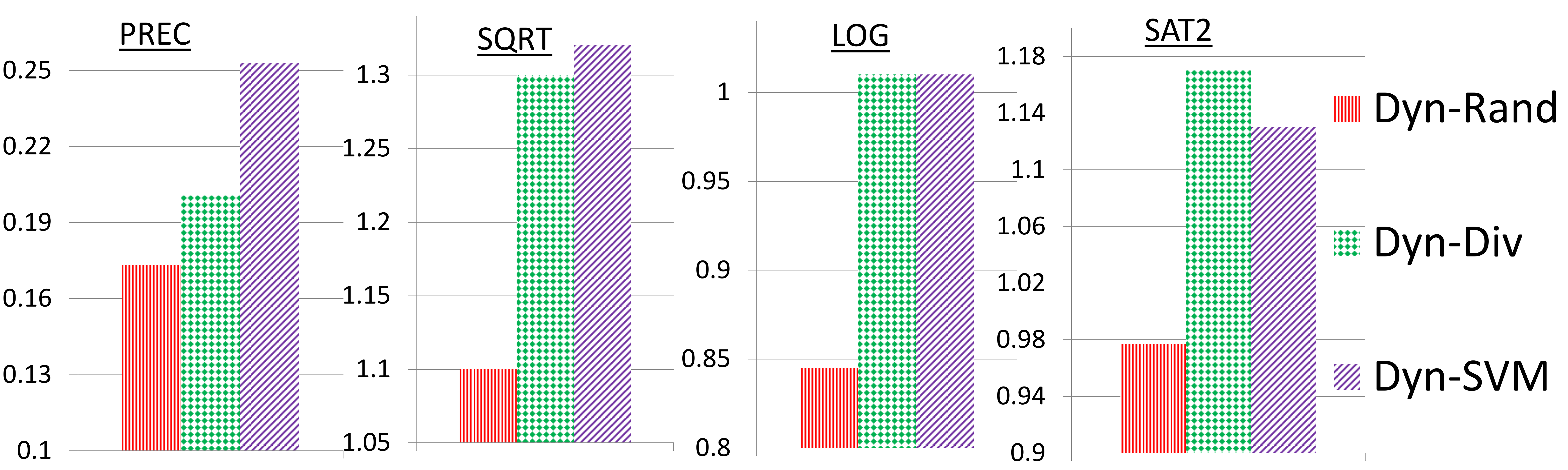}
\caption{Comparing learned dynamic rankings with heuristic baselines for TREC.}
\label{fig:dyndyntrec}
\end{center}
\end{figure}

\begin{figure}[t]
\begin{center}
\includegraphics[width=0.5\textwidth]{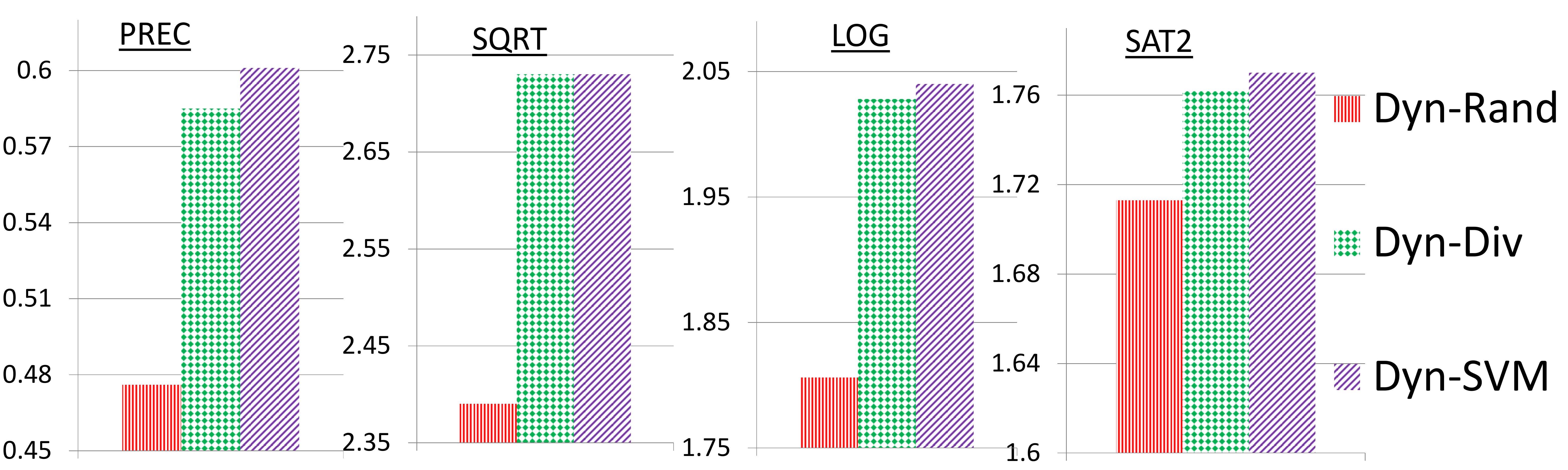}
\caption{Same as Figure \ref{fig:dyndyntrec} for WEB.}
\label{fig:dyndynweb}
\end{center}
\end{figure}

The results are shown in Figure~\ref{fig:dyndyntrec} for TREC and in Figure~\ref{fig:dyndynweb} for WEB. Since we are now comparing two-level rankings of equal size, we measure performance in terms of expected utility. On both datasets \emph{Dyn-SVM} performs substantially better than \emph{Dyn-Rand}. This implies that our method can effectively learn which documents to place at the top of the first-level ranking. Surprisingly, simply extending the diversified ranking of \emph{Dyn-Div} using the TFIDF heuristic produces dynamic rankings are are competitive with \emph{Dyn-SVM}. In retrospect, this is not too surprising for two reasons. First, our experiments with \emph{Dyn-SVM} use rather simple features to describe the relationship between the head document and the documents in the second-level ranking -- most of which are derived from their TFIDF similarity. Stronger features exploiting document ontologies or browsing patterns could easily be incorporated into the feature vector. Second, the learning method of \emph{Dyn-Div} is actually a special case of \emph{Dyn-SVM} when using the SAT1 loss (i.e. users are satisfied after a single relevant document) and second-level rankings of width 0. However, we argue that it is still highly preferable to directly optimize the desired loss function and two-level ranking using \emph{Dyn-SVM}, since the reliance on heuristics may fail on other datasets.

\section{Conclusions and Future Work}
\label{sec:conclusions}
 We proposed a two-level dynamic ranking approach that provides both diversity and depth for ambiguous queries by exploiting user interactivity. In particular, we showed that the approach has the following desirable properties. First, it covers a large family of performance measures, making it easy to select a diminishing returns model for the application setting at hand. Second, we presented an efficient algorithm for constructing two-level rankings that maximizes the given performance measure with provable approximation guarantees. Finally, we provided a structural SVM algorithm for learning two-level ranking functions, showing that it can effectively generalize to new queries.

The idea of dynamic ranking models that allow and actively anticipate user interactions, as well as the proof that such models can be learned and implemented with provable approximation guarantees, opens a wide range of further questions. First, we need user studies that investigate what types of user-interaction policies $\pi_d$ are most accurate in practice. While the two-level model used in this paper appears to be more plausible than the infinite-level model of \cite{brandt:11}, a detailed user study needs to investigate this question. Second, the learning approach presented in this paper requires relevance judgments for each intent of a query. While manually collecting such judgments is feasible in commercial settings like Web search engines, it would be desirable to have algorithms that can learn such models from implicit feedback in settings with resource constraints. Finally, there is a whole range of additional information that could be incorporated into the model. For example, a taxanomy (of words or documents) is likely to provide valuable features for modeling the dependencies between head document and the results in the second-level ranking.

\bibliographystyle{abbrv}

\bibliography{cikm2011}

\begin{thebibliography}{10}

\bibitem{aalb:92}
I.~J. Aalbersberg.
\newblock Incremental relevance feedback.
\newblock In {\em ACM Conference on Research and Development in Information
  Retrieval {(SIGIR)}}, 1992.

\bibitem{agra:09}
R.~Agrawal, S.~Gollapudi, A.~Halverson, and S.~Ieong.
\newblock Diversifying search results.
\newblock In {\em Proceedings of the Second ACM International Conference on Web
  Search and Data Mining (WSDM)}, 2009.

\bibitem{brandt:11}
C.~Brandt, T.~Joachims, Y.~Yue, and J.~Bank.
\newblock Dynamic ranked retrieval.
\newblock In {\em ACM International Conference on Web Search and Data Mining
  (WSDM)}, 2011.

\bibitem{carb:98}
J.~Carbonell and J.~Goldstein.
\newblock The use of mmr, diversity-based reranking for reordering documents
  and producing summaries.
\newblock In {\em ACM Conference on Research and Development in Information
  Retrieval {(SIGIR)}}, 1998.

\bibitem{chen:06}
H.~Chen and D.~R. Karger.
\newblock Less is more: probabilistic models for retrieving fewer relevant
  documents.
\newblock In {\em ACM Conference on Research and Development in Information
  Retrieval {(SIGIR)}}, 2006.

\bibitem{clarke:08}
C.~L. Clarke, M.~Kolla, G.~V. Cormack, O.~Vechtomova, A.~Ashkan,
  S.~B\"{u}ttcher, and I.~MacKinnon.
\newblock Novelty and diversity in information retrieval evaluation.
\newblock In {\em ACM Conference on Research and Development in Information
  Retrieval {(SIGIR)}}, 2008.

\bibitem{cramer:09}
M.~Cramer, M.~Wertheim, and D.~Hardtke.
\newblock Demonstration of improved search result relevancy using real-time
  implicit relevance feedback.
\newblock In {\em Understanding the user - workshop in conjuction with
  SIGIR'09}, 2009.

\bibitem{fin:08}
T.~Finley and T.~Joachims.
\newblock Training structural svms when exact inference is intractable.
\newblock In {\em International Conference on Machine Learning {(ICML)}}, 2008.

\bibitem{hoc:98}
D.~S. Hochbaum and A.~Pathria.
\newblock Analysis of the greedy approach in problems of maximum k-coverage.
\newblock {\em Naval Research Logistics (NRL)}, 45:615--627, 1998.

\bibitem{jar:02}
K.~J\"{a}rvelin and J.~Kek\"{a}l\"{a}inen.
\newblock Cumulated gain-based evaluation of ir techniques.
\newblock {\em ACM Transactions on Information Systems}, 20:422--446, 2002.

\bibitem{joa:09}
T.~Joachims, T.~Finley, and C.-N.~J. Yu.
\newblock Cutting-plane training of structural svms.
\newblock {\em Machine Learning}, 77(1):27--59, 2009.

\bibitem{lav:01}
V.~Lavrenko and W.~B. Croft.
\newblock Relevance based language models.
\newblock In {\em ACM Conference on Research and Development in Information
  Retrieval {(SIGIR)}}, 2001.

\bibitem{Liu:2009}
T.-Y. Liu.
\newblock Learning to rank for information retrieval.
\newblock {\em Foundations and Trends in Information Retrieval}, 3, March 2009.

\bibitem{lazy}
M.~Minoux.
\newblock Accelerated greedy algorithms for maximizing submodular set
  functions.
\newblock {\em In Optimization Techniques, LNCS}, 1978.

\bibitem{submod}
G.~L. Nemhauser, L.~A. Wolsey, and M.~L. Fisher.
\newblock An analysis of approximations for maximizing submodular set
  functions.
\newblock {\em Mathematical Programming}, 14:265--294, 1978.

\bibitem{rad:08}
F.~Radlinski, R.~Kleinberg, and T.~Joachims.
\newblock Learning diverse rankings with multi-armed bandits.
\newblock In {\em International Conference on Machine Learning {(ICML)}}, 2008.

\bibitem{Robertson/77}
S.~Robertson.
\newblock The probability ranking principle in ir.
\newblock {\em Journal of Documentation}, 33(4):294--304, December 1977.

\bibitem{Shen/Zhai/05a}
X.~Shen and C.~Zhai.
\newblock Active feedback in ad hoc information retrieval.
\newblock In {\em ACM Conference on Research and Development in Information
  Retrieval {(SIGIR)}}, 2005.

\bibitem{swam:08}
A.~Swamintahan, C.~Metthew, and D.~Kirovski.
\newblock Essential pages.
\newblock In {\em Technical Report, MSR-TR-2008-15}, Microsoft Research, 2008.

\bibitem{tso:05}
I.~Tsochantaridis, T.~Joachims, T.~Hofmann, and Y.~Altun.
\newblock Large margin methods for structured and interdependent output
  variables.
\newblock {\em Journal of Machine Learning Research}, 6:1453--1484, 2005.

\bibitem{Xu/Akella/08a}
Z.~Xu and R.~Akella.
\newblock Active relevance feedback for difficult queries.
\newblock In {\em ACM Conference on Information and Knowledge Management
  {(CIKM)}}, 2008.

\bibitem{yue:08}
Y.~Yue and T.~Joachims.
\newblock Predicting diverse subsets using structural svms.
\newblock In {\em International Conference on Machine Learning {(ICML)}}, 2008.

\bibitem{zhai:03}
C.~X. Zhai, W.~W. Cohen, and J.~Lafferty.
\newblock Beyond independent relevance: methods and evaluation metrics for
  subtopic retrieval.
\newblock In {\em ACM Conference on Research and Development in Information
  Retrieval {(SIGIR)}}, 2003.

\bibitem{Zhang/Zhang/10a}
L.~Zhang and Y.~Zhang.
\newblock Interactive retrieval based on faceted feedback.
\newblock In {\em ACM Conference on Research and Development in Information
  Retrieval {(SIGIR)}}, 2010.

\end{thebibliography}

\balancecolumns
\end{document}